\title[ Toda and G\texorpdfstring{$\beta$}{B}E ]
      { Scattering of the Toda system and the Gaussian \texorpdfstring{$\beta$}{B}-ensemble
      }
\author{Reda   \textsc{Chhaibi}}
\address{Institut de math\'ematiques de Toulouse, UMR5219, Universit\'e de Toulouse,  118 route de Narbonne, F-31062 Toulouse Cedex 9, France}
\email{reda.chhaibi@math.univ-toulouse.fr}
\date{\today}
\DeclareMathOperator{\id}{id}
\DeclareMathOperator{\Tr}{Tr}
\def\half{\frac{1}{2}}
\def\1{{\mathbf 1}}
\def\C{{\mathbb C}}
\def\R{{\mathbb R}}
\def\P{{\mathbb P}}
\def\E{{\mathbb E}}
\def\Lc{{\mathcal L}}
\def\Nc{{\mathcal N}}
\def\Sc{{\mathcal S}}
\def\Tc{{\mathcal T}}
\def\afrak{{\mathfrak a}}
\def\kfrak{{\mathfrak k}}
\def\nfrak{{\mathfrak n}}
\numberwithin{equation}{section}
\numberwithin{figure}{section}
\newtheorem{thm}{Theorem}[section]
\newtheorem{proposition}[thm]{Proposition}
\newtheorem{rmk}[thm]{Remark}
\begin{document}

\begin{abstract}
   The classical Toda flow is a well-known integrable Hamiltonian system that diagonalizes matrices. By keeping track of the distribution of entries and precise scattering asymptotics, one can exhibit matrix models for log-gases on the real line. These types of scattering asymptotics date back to fundamental work of Moser.

   More precisely, using the classical Toda flow acting on symmetric real tridiagonal matrices, we give a "symplectic" proof of the fact that the Dumitriu-Edelman tridiagonal model has a spectrum following the Gaussian $\beta$-ensemble. 
\end{abstract}

\maketitle

\setcounter{tocdepth}{4}
\medskip
\medskip
\medskip
\hrule
\tableofcontents
\hrule

\section{Gaussian \texorpdfstring{$\beta$}{B}-ensembles and the Macdonald-Mehta-Opdam formula}
The Gaussian $\beta$-ensemble is the probability distribution for an $n$-point configurations in the real line:
\begin{align}
\label{def:beta_ensemble}
\left( G\beta E_n \right) \quad \P\left( \Lambda \in dx \right) & := \frac{1}{Z^\beta_n} \left| \Delta(x) \right|^\beta e^{-\half \sum_{j=1}^n x_i^2 } \prod_{j=1}^n dx_i \ ,
\end{align}
with $Z_n^\beta$ being the normalization constant. It is a particular case of $\beta$-ensembles with general confining potential $V$:
\begin{align}
   \label{def:general_beta_ensemble}
   \P\left( \Lambda \in dx \right) & := \frac{1}{Z^\beta_n} \left| \Delta(x) \right|^\beta e^{- \sum_{j=1}^n V(x_i) } \prod_{j=1}^n dx_i \ .
\end{align}

From \cite{bib:DumitriuEdelman}, the $G \beta E_n$is conveniently obtained as the spectrum of the tridiagonal matrix:
\begin{align}
\label{eq:T_beta}
   T_\beta = \begin{pmatrix}
               \Nc_1                & \chi_{\half (n-1) \beta} &     0                      & \dots & 0                        & 0\\
               \chi_{\half (n-1)\beta} & \Nc_2                 &   \chi_{\half (n-2) \beta} & \dots & 0                        & 0\\
               \dots                   & \dots                    &   \dots                    & \dots & \dots                    & \dots\\
               0                     &     0                    &     0                      & \dots & \Nc_{n-1}                 & \chi_{\half \beta}\\
               0                     &     0              &     0                            & \dots & \chi_{\half \beta}       & \Nc_n\\
               \end{pmatrix}
\end{align}
where the variables with different symbols are independent. $\chi_k$ stands for a $\chi$-distributed random variable with $k$ degrees of freedom and $\Nc_i$ stands for a standard Gaussian random variable. We record this fact for future reference as

\begin{thm}[Dumitriu and Edelman, \cite{bib:DumitriuEdelman}]
\label{thm:DumitriuEdelman}
The spectrum of $T_\beta$, as given in Eq. \eqref{eq:T_beta}, is the $\beta$-ensemble on the line, with quadratic confinement potential.

Moreover, the Macdonald-Mehta-Opdam identity holds true:
$$ Z^\beta(\R^n)   = \int_{\R^n} dx\ \left| \Delta(x) \right|^\beta e^{-\half \sum_{j=1}^n x_i^2 }
                   =
                     \left( 2 \pi \right)^{ \half n} 
                     \prod_{j=1}^n \frac{\Gamma\left( 1 + \frac{\beta}{2} j \right)}
                                        {\Gamma\left( 1 + \frac{\beta}{2}   \right)} \ .$$
\end{thm}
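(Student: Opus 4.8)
The plan is to prove both assertions simultaneously by pushing the explicit law of the entries of $T_\beta$ through the inverse spectral map of Jacobi matrices. Write $a_1,\dots,a_n$ for the diagonal entries and $b_1,\dots,b_{n-1}$ for the positive off-diagonal entries of \eqref{eq:T_beta}. By independence, the joint law of the entries has density proportional to
\[
\prod_{i=1}^{n-1} b_i^{\frac{(n-i)\beta}{2}-1}\; e^{-\half\sum_{i=1}^n a_i^2\,-\,\half\sum_{i=1}^{n-1} b_i^2}\ ,\qquad b_i>0\ .
\]
The spectral theorem for tridiagonal matrices with positive off-diagonals provides a diffeomorphism $(a,b)\mapsto(\lambda,q)$ onto pairs made of the ordered spectrum $\lambda_1<\dots<\lambda_n$ and the first coordinates $q=(q_1,\dots,q_n)$ of the normalized eigenvectors, subject to $q_j>0$ and $\sum_j q_j^2=1$; equivalently $\mu=\sum_j q_j^2\,\delta_{\lambda_j}$ is the spectral measure at $e_1$, from which $T_\beta$ is recovered by applying Gram--Schmidt to $1,x,x^2,\dots$ in $L^2(\mu)$.

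First I would establish the two structural identities that drive everything. The Jacobian of the change of variables is
\[
da\,db\;=\;\frac{\prod_{i=1}^{n-1} b_i}{\prod_{j=1}^n q_j}\;d\lambda\;d\sigma(q)\ ,
\]
with $d\sigma$ the surface measure on the positive part of the unit sphere $\{\sum_j q_j^2=1\}$. The second identity, coming from the orthogonal polynomials $\pi_0,\dots,\pi_{n-1}$ for $\mu$, reads
\[
\prod_{i=1}^{n-1} b_i^{\,n-i}\;=\;|\Delta(\lambda)|\;\prod_{j=1}^n q_j\ ,
\]
and follows from $\prod_{k=1}^{n-1} b_k^{2(n-k)}=\prod_{k=0}^{n-1}\|\pi_k\|_\mu^2$ together with the Hankel evaluation $\det\big[\,(T_\beta^{k+l})_{11}\,\big]_{k,l=0}^{n-1}=\Delta(\lambda)^2\prod_j q_j^2$ (a Vandermonde/Cauchy--Binet computation). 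Substituting both into the entry density collapses the powers of $b_i$ into a power of $|\Delta(\lambda)|$ times a monomial in the $q_j$, while the quadratic exponentials reassemble, through $\Tr(T_\beta^2)=\sum_j\lambda_j^2$, into the weight $e^{-\half\sum_j\lambda_j^2}$; the degrees of freedom and variances in \eqref{eq:T_beta} are precisely those making these exponents equal to the ones in \eqref{def:beta_ensemble}. The joint density of $(\lambda,q)$ thus factorizes as
\[
|\Delta(\lambda)|^\beta\,e^{-\half\sum_j\lambda_j^2}\cdot\prod_{j=1}^n q_j^{\,\beta-1}\ ,
\]
and integrating the $q$-part over the sphere leaves the $\lambda$-marginal proportional to $|\Delta(\lambda)|^\beta e^{-\half\sum_j\lambda_j^2}$, i.e. $G\beta E_n$.

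The Macdonald--Mehta--Opdam evaluation then comes for free. Since the entry density is a genuine probability density, its total mass equals $1$; expressing that mass in spectral coordinates gives
\[
1\;=\;\frac{1}{\mathcal{Z}_{\mathrm{ent}}}\;Z^\beta(\R^n)\int \prod_{j=1}^n q_j^{\,\beta-1}\,d\sigma(q)\ ,
\]
where $\mathcal{Z}_{\mathrm{ent}}$ is the explicit product of the $n$ Gaussian and $n-1$ chi normalizing constants, the Gaussians contributing $(2\pi)^{n/2}$ and each $\chi$ an explicit power of $2$ times a Gamma value, while the spherical integral is a Dirichlet integral equal to a ratio of Gamma functions. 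Solving for $Z^\beta(\R^n)$ and letting the Gamma factors telescope, via $\Gamma(1+\tfrac{\beta}{2}j)=\tfrac{\beta}{2}j\,\Gamma(\tfrac{\beta}{2}j)$, produces exactly $(2\pi)^{\half n}\prod_{j=1}^n \Gamma(1+\tfrac{\beta}{2}j)/\Gamma(1+\tfrac{\beta}{2})$.

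The hard part is the Jacobian identity: differentiating the inverse spectral map honestly is the single genuinely technical step, most cleanly carried out by differentiating the moment relations $(T_\beta^k)_{11}=\sum_j q_j^2\lambda_j^k$ and the three-term/continued-fraction relations defining the $a_i,b_i$. I expect essentially all of the effort to go here, and it is precisely at this point that the symplectic viewpoint pays off: on Jacobi matrices the pair $(\lambda_j,\log q_j)$ behaves like action-angle coordinates, so that instead of computing a Jacobian one may let the Toda flow linearize the spectral data and transport the Gibbs law of the entries onto the $G\beta E$ weight through its scattering asymptotics---which is the route the remainder of the paper takes.
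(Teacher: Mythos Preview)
Your proposal is correct in substance, and you even diagnose the situation accurately in your final paragraph: what you have written is essentially the original Dumitriu--Edelman change-of-variables proof, hinging on the Jacobian $da\,db=\frac{\prod_i b_i}{\prod_j q_j}\,d\lambda\,d\sigma(q)$ and the Hankel/Vandermonde identity $\prod_i b_i^{\,n-i}=|\Delta(\lambda)|\prod_j q_j$. The paper, by contrast, never computes that Jacobian. Its entire point is to replace the Jacobian step by dynamics: one integrates against the volume form $\omega_{\Tc}=\bigwedge_i \frac{db_i}{b_i}\wedge\bigwedge_i da_i$, which is invariant under both the Toda flow $e^{tV}$ and the free flow $e^{tV^{free}}$, so the successive changes of variable $X_0\mapsto e^{tV}X_0$ and $X_0\mapsto e^{tV^{free}}X_0$ cost nothing; letting $t\to\infty$ then makes the scattering map $\Sc=\lim_{t\to\infty}e^{-tV}e^{tV^{free}}$ appear, and the identity $\Delta_n(\Sc x)=|u_0(1)|^{2n}\prod_i b_i^{2(n-i)}$ (your second identity, read through the scattering map) converts the $b$-monomial into the Vandermonde power. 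In effect, the paper trades one hard static computation (the Jacobian of the inverse spectral map) for a softer dynamical one (Moser's precise scattering asymptotics, Theorem~\ref{thm:precise_scattering}). Your route is shorter once the Jacobian is in hand and yields the MMO evaluation by the same mass-balance argument; the paper's route is longer to set up but explains structurally why the entries decouple---it is the Liouville/action-angle invariance doing the work---and is the ``symplectic proof'' the paper advertises. One small caution: the exponential bookkeeping in your sketch (matching $e^{-\half\sum a_i^2-\half\sum b_i^2}$ to $e^{-\half\Tr T_\beta^2}=e^{-\half\sum\lambda_j^2}$) requires care with the precise $\chi$-parameters and variances in \eqref{eq:T_beta}; the factor of $2$ between $\sum b_i^2$ and the off-diagonal contribution to $\Tr T_\beta^2$ must be absorbed consistently.
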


Thanks to this tridiagonal model, the spectra of $G\beta E_n$ and $G \beta E_{n+1}$ are naturally coupled. This is the same coupling in $\beta=2$ of Gaussian measure on infinite Hermitian matrices studied by Olshanki-Vershik \cite{olshanski1996ergodic}, after the so-called Trotter reduction.

From a physical perspective, the spectrum gives a log-gas with quadratic confinement potential. The fact that Dumitriu and Edelman's model has independent entries is miraculous and shows that $\beta$-ensembles are integrable in a sense.

The present work provides another proof that Dumitriu and Edelman's triadiagonal model has a spectrum distributed according to the $\beta$-ensemble on the line. While there is no new result per se, it is the approach that is novel. This derivation uses a Hamiltonian technique based on the scattering for the Toda flow. It is perhaps ``the symplectic proof'' that Dumitriu and Edelman mention in their paper \cite{bib:DumitriuEdelman} in the form of their Remark 2.10. Also, since this geometric proof splits the space into independent entries, it gives the change of variables which produces a proof of the Macdonald-Mehta-Opdam (MMO) integral.

The general approach goes as follows. We consider a random matrix with fixed distribution as starting point for the Toda flow. As the Toda flow is an integrable dynamical system that diagonalizes matrices, we are able to keep track of the matrix distribution throughout the flow.

Another interesting point regards the nature of the integrability of such models and we shed some light on the matter. Indeed, with initial measures expressed in terms of Casimirs, i.e. invariants of motion, the spectral distribution has a tractable and closed form expression. The particularity of quadratic potentials is that they are expressed using the first Casimir only, whose special structure leads to independence in the entries of the matrix model.

\medskip

{\bf Summary. }
We start by developing the necessary results from the theory of the Toda lattice in Section \ref{section:Toda_flow_definition}. 

Then we explicitly compute the scattering asymptotics, using orthogonal polynomials techniques. We will be particularly interested in precise scattering asymptotics which are originally due to Moser \cite{bib:Moser75} and which show the appearance of a logarithmic interaction via a Vandermonde. This allows to construct a scattering map between the generalised Toda flow and a free dynamic. 

Before proving the main result, we find invariant volume forms under the Toda flow. These will play the role of reference measures. Putting everything together in Section \ref{section:Toda_proof_main} shows that, indeed, the spectrum is distributed as \eqref{def:beta_ensemble}. The MMO formula is obtained by keeping track of the normalizing constant.

\section{Definition of the Toda flow}
\label{section:Toda_flow_definition}
\
\medskip

{\bf Notations.}
Let $\kfrak$, $\afrak$ and $\nfrak$ be respectively the subspaces in $M_n\left( \C \right)$ of anti-Hermitian matrices, diagonal real matrices and upper triangular matrices. We have the direct sum decomposition:
\begin{align}
\label{eq:kan_decomposition}
M_n\left( \C \right) & = \kfrak \oplus \afrak \oplus \nfrak
\end{align}
Any matrix $X \in M_n\left( \C \right)$ has a unique triangular decomposition into:
\begin{align}
\label{eq:triangular_decomposition}
X & = \left[ X \right]_- + \left[ X \right]_0 + \left[ X \right]_+
\end{align}
where $\left[X\right]_-$ (resp. $\left[X\right]_+) $ are respectively lower triangular and upper triangular. $\left[X\right]_0$ is diagonal. For each subspace $E$ in the direct sum decomposition \eqref{eq:kan_decomposition}, we denote the projection onto $E$ by $\Pi_E$. These projection are given for $X \in M_n(\C)$ by the expressions:
$$ \Pi_\kfrak(X) = \left[X\right]_- - \left[X\right]_-^* + i \Im \left[X\right]_0
   ; \quad
   \Pi_\afrak(X) = \Re\left[X\right]_0
   ; \quad
   \Pi_\nfrak(X) = \left[X + X^*\right]_+ \ .
$$
Indeed, one can easily check that $\Pi_\kfrak + \Pi_\afrak + \Pi_\nfrak = \id_{M_n(\C)}$.

\subsection{Flow definition}

Let $\Tc$ be the space of symmetric tridiagonal matrix form:
$$ X = \begin{pmatrix}
        a_1   &   b_1 &     0 & \dots & 0       & 0\\
        b_1   &   a_2 &   b_2 & \dots & 0       & 0\\
        \dots & \dots & \dots & \dots & \dots   & \dots\\
          0   &     0 &     0 & \dots & a_{n-1} & b_{n-1}\\
          0   &     0 &     0 & \dots & b_{n-1} & a_n\\
       \end{pmatrix},
$$
with $b_j > 0$.

The Toda flow acts on $\Tc$ via the differential equation:
\begin{align}
\label{eq:qr_flow}
\dot{X} = \left[ X, \Pi_\kfrak(X) \right] \ . 
\end{align}

Formally, define the vector field $V$ at $x \in \Tc$ by:
$$ V_x = [ x, \Pi_\kfrak(x)] \ .$$
And the Toda flow is obtained by exponentiating the vector field $V$. In fact the flow is equivalent to the pair of equations:
$$ \left\{
   \begin{array}{cc}
   X_t       & = Q_t \Lambda Q_t^*\\
   \dot{Q}   & = -\Pi_\kfrak\left( X \right) Q
   \end{array}
   \right.
$$
by noticing that equation \eqref{eq:qr_flow} can be rewritten:
$$ \dot{X} = \left[ \dot{Q} Q^{-1}, X \right]$$

Thanks to this Lax-pair formulation, we see that the flow acts by isospectral transformations. 

\begin{rmk}[The Toda flow preserves $\Tc$]
\label{rmk:QR_stabilizers}
Positivity of the extra-diagonal is preserved because equation \eqref{eq:qr_flow} implies:
 \begin{align}
\label{eq:b_i_dot}
\dot{b_i} & = b_i \left( a_i - a_{i+1} \right) 
\end{align}
\end{rmk}

\subsection{Long time behavior: the sorting property}
We assume that $X_0$ is diagonalizable with distinct eigenvalues $\left( \Lambda_1, \dots, \Lambda_n \right)$ with 
$$ \Lambda_1 > \Lambda_2 > \dots > \Lambda_n \ .$$
The diagonalized form of the initial data is given by:
$$ X_0 = Q_0 \Lambda Q_0^* \ .$$
It is known in numerical analysis that flow performs continuously the QR algorithm and thus gives the spectrum in long time. Similar results hold for other families of isospectral transformations (see the survey \cite{bib:watkins84}).

\begin{thm}[Symes \cite{bib:Symes} in tridiagonal case, \cite{bib:Chu84} in complex case]
\label{thm:toda_diagonalises}
The Toda flow diagonalizes matrices in the following sense:
\begin{itemize}
 \item The Toda flow interpolates in continuous time the Arnoldi-Lanczos-QR algorithm used in numerical analysis.
 \item We have the following long time behavior: 
$$ \lim_{t \rightarrow  \infty} X_t = \Lambda$$
$$ \lim_{t \rightarrow -\infty} X_t = w_0 \Lambda  w_0$$
where $w_0$ is the permutation matrix reversing the order of the canonical basis. In the context of reflection groups, $w_0$ is the longest element in the symmetric group $\mathfrak{S}_n$ when written as a product of transpositions $(i \ i+1)$.
\end{itemize}
\end{thm}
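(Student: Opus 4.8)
The plan is to trade the nonlinear flow for an \emph{explicit} solution obtained from the $QR$-factorisation of a matrix exponential, in the spirit of Symes, and then to read off the asymptotics. Since $\exp(tX_0)$ is always invertible, it admits a unique factorisation
\[ \exp(t X_0) = Q(t) R(t) \ , \]
with $Q(t)$ orthogonal and $R(t)$ upper triangular with positive diagonal, both depending smoothly on $t$ and satisfying $Q(0)=R(0)=\id$. The central claim is that the Toda trajectory through $X_0$ is simply
\[ X_t = Q(t)^{*} X_0 Q(t) \ , \]
which is manifestly isospectral and reduces the whole problem to the behaviour of the single orthogonal factor $Q(t)$ (which is \emph{not} the eigenvector matrix $Q_t$ of the Lax pair, but the Gram--Schmidt factor of the exponential).

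To justify the claim I would differentiate the factorisation. From $X_0\exp(tX_0)=\dot Q R + Q\dot R$ and $\exp(tX_0)=QR$, left-multiplying by $Q^{*}$ and right-multiplying by $R^{-1}$ gives
\[ Q^{*} X_0 Q = Q^{*}\dot Q + \dot R R^{-1} \ . \]
The left-hand side is symmetric; on the right, $Q^{*}\dot Q$ is anti-symmetric (differentiate $Q^{*}Q=\id$) so lies in $\kfrak$, while $\dot R R^{-1}$ is upper triangular so lies in $\afrak\oplus\nfrak$. Because \eqref{eq:kan_decomposition} is a \emph{direct} sum, matching $\kfrak$-components forces $Q^{*}\dot Q = \Pi_\kfrak(X_t)$ with $X_t:=Q^{*}X_0Q$. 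Substituting $\dot Q = Q\,\Pi_\kfrak(X_t)$ into $\dot X_t = \dot Q^{*}X_0 Q + Q^{*}X_0\dot Q$ and using $\Pi_\kfrak(X_t)^{*}=-\Pi_\kfrak(X_t)$ yields exactly $\dot X_t=[X_t,\Pi_\kfrak(X_t)]$; uniqueness of solutions of \eqref{eq:qr_flow} then identifies this with the Toda flow. The interpolation statement follows by reading the same identity over a unit time step: the time-$s$ map sends $X_t$ to the conjugate of $X_t$ by the orthogonal factor of $\exp(sX_t)$, which is precisely one step of the ($\exp$-shifted) $QR$ iteration, the continuous avatar of the Lanczos/Arnoldi reduction.

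For the asymptotics I would diagonalise $X_0=Q_0\Lambda Q_0^{*}$ and write $\exp(tX_0)=Q_0\,e^{t\Lambda}\,Q_0^{*}$ with $\Lambda=\diag(\Lambda_1,\dots,\Lambda_n)$, $\Lambda_1>\dots>\Lambda_n$. Since $Q(t)$ orthonormalises the columns of $\exp(tX_0)$, and the span of the first $k$ columns is $\sum_l e^{t\Lambda_l}(v_l)_j v_l$ for $j\le k$ (with $v_l$ the columns of $Q_0$), as $t\to+\infty$ this flag is dominated by the top-$k$ eigendirections $\Span(v_1,\dots,v_k)$. Hence $Q(t)\to Q_0 D$ for a diagonal sign matrix $D$, and $X_t=Q(t)^{*}X_0Q(t)\to D\Lambda D=\Lambda$, with eigenvalues in \emph{decreasing} order. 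Reversing $t$ exchanges largest and smallest eigenvalues, so the dominant flag becomes $\Span(v_n,v_{n-1},\dots)$ and the limit is $w_0\Lambda w_0$. One can moreover quantify the rate: the off-diagonal entries obey $b_i(t)=O\!\left(e^{-(\Lambda_i-\Lambda_{i+1})t}\right)$ as $t\to+\infty$, consistent with \eqref{eq:b_i_dot}.

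The main obstacle is the \emph{non-degeneracy} that makes the limiting permutation the fully sorted one: the flag convergence above requires that all leading principal minors of $Q_0$ be non-zero (equivalently, that $e_1$ be a cyclic vector for $X_0$, and $e_n$ for the $t\to-\infty$ analysis), since otherwise a column could align with a lower eigendirection and the limit would be an unsorted permutation of $\Lambda$. This is exactly where the tridiagonal hypothesis $b_i>0$, i.e. \emph{unreducedness}, enters: an unreduced symmetric tridiagonal matrix has simple spectrum, $e_1$ and $e_n$ are cyclic, and by the oscillation theory of Jacobi matrices the ordered eigenvector matrix is totally positive up to sign, so the relevant minors are automatically non-vanishing. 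I would isolate this as a short lemma on unreduced tridiagonal matrices; granting it, the genuine convergence of $Q(t)$ (not merely of the column flags) follows from the spectral gaps $\Lambda_i>\Lambda_{i+1}$ controlling the conditioning of the Gram--Schmidt process, and the two limits $\Lambda$ and $w_0\Lambda w_0$ drop out.
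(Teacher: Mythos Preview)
The paper does not supply its own proof of this theorem: it is quoted as a result of Symes (tridiagonal case) and Chu (complex case), with no argument given, and is used only as a black box for the crude scattering \eqref{eq:crude_scattering1}--\eqref{eq:crude_scattering2}. So there is nothing in the paper to compare your proposal against.

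That said, what you have written is precisely Symes's original argument, and it is correct in outline. The identification $X_t=Q(t)^{*}X_0Q(t)$ via the $QR$ factorisation of $\exp(tX_0)$, the splitting $Q^{*}X_0Q=Q^{*}\dot Q+\dot R R^{-1}$ along $\kfrak\oplus(\afrak\oplus\nfrak)$, and the flag-convergence analysis as $t\to\pm\infty$ are all standard and sound. You have also correctly isolated the one genuine subtlety: the limiting permutation is the fully sorted one only if all leading principal minors of the eigenvector matrix $Q_0$ (and of $Q_0w_0$ for $t\to-\infty$) are nonzero, and this is where the hypothesis $b_i>0$ enters. Your invocation of oscillation theory for Jacobi matrices is the right tool; in a formal write-up you would cite Gantmacher--Krein or simply prove directly that $e_1$ cyclic for an unreduced tridiagonal $X_0$ forces these minors to be nonzero. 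Two minor remarks: your displayed expression ``$\sum_l e^{t\Lambda_l}(v_l)_j v_l$ for $j\le k$'' is garbled (you mean the $j$-th column of $\exp(tX_0)$ is $\sum_l e^{t\Lambda_l}\langle v_l,e_j\rangle v_l$), and the interpolation claim would read more cleanly if you noted that $\exp(sX_t)=Q(t)^{*}\exp(sX_0)Q(t)$, whence the $QR$ factor of $\exp((t+s)X_0)$ is $Q(t)$ times the $QR$ factor of $\exp(sX_t)$.
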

From this result arises the idea of keeping track of the distribution of the matrix $T_\beta$ in Eq. \eqref{eq:T_beta}, continuously throughout the flow.

\begin{rmk}
\label{rmk:u_angle_coordinate}
A classical remark in eigenvalue problems is that $Q$ has a special structure because it conjugates a diagonal matrix to tridiagonal matrix. In fact, $Q_t$ can be entirely recovered from the first row of the matrix. See Theorem 7.2.1 in Partlett \cite{bib:Parlett}. This first row $u$ is important because it plays the role of angle coordinates in the integrable Toda flow.
\end{rmk}

\subsection{Flow on tridiagonal matrices}
Let us now explain how the specialisation to tridiagonal matrices gives the original Toda Hamiltonian flow. Let $n$ identical particles, seen as point masses with mass normalized to $1$. The configuration space is then $\R^{2n}$. A configuration is a pair $\left( p, q \right) \in \R^{2n}$ where $p$ are momenta and $q$ are positions.

The dynamical system defined by Toda \cite{bib:todabook} is the Hamiltonian system associated to
\begin{align}
\label{def:toda}
   H = & \frac{\|p\|^2}{2} + V(q) \ ,
\end{align}
with $V$ the Toda potential:
$$ V(q) = \sum_{i=1}^{n-1} 2 e^{-(q_i - q_{i+1})} \ .$$

Therefore, the equations of motion are given by:
\begin{align}
\label{eq:q_dynamic}
\dot{q}_j = & \frac{\partial H}{\partial p_j} = p_j \ ,\\
\label{eq:p_dynamic}
\dot{p}_j = & -\frac{\partial H}{\partial q_j} = 2 e^{-(q_j - q_{j+1})} - 2 e^{-(q_{j-1} - q_{j})} \ .
\end{align}
The previous equations are valid for all indices $1 \leq j \leq n$ by by considering $q_0 = -\infty$ and $q_{n+1} = \infty$.

Because the center of mass has a uniform dynamic, we can assume that it is fixed and reduce the configuration space to $\R^{2(n-1)}$. Now, if one introduces the Flaschka variables
$$ a_i = p_i$$
$$ b_i = 2 e^{-\half (q_i - q_{i+1})}$$
and forms the tridiagonal matrix
$$ X = \begin{pmatrix}
        a_1   &   b_1 &     0 & \dots & 0       & 0\\
        b_1   &   a_2 &   b_2 & \dots & 0       & 0\\
        \dots & \dots & \dots & \dots & \dots   & \dots\\
          0   &     0 &     0 & \dots & a_{n-1} & b_{n-1}\\
          0   &     0 &     0 & \dots & b_{n-1} & a_n\\
       \end{pmatrix},
$$
then the Hamiltonian dynamic \eqref{eq:q_dynamic} \eqref{eq:p_dynamic} is exactly equivalent to the Toda flow \ref{eq:qr_flow} acting on real tridiagonal matrices. Note that in the Flaschka variables, the Hamiltonian takes the form
$$
H = \half \sum_{i=1}^{n}a_i^2+\sum_{i=1}^{n}b_i^2
  = \half \Tr X^2
\ .
$$

From the diagonalisation property in Theorem \ref{thm:toda_diagonalises}, we easily deduce the crude scattering behavior as $t \rightarrow \infty$:
\begin{align}
  \label{eq:crude_scattering1}
  p_i(t) = \Lambda_i + o(1)       \quad , \quad q_i(t) = \Lambda_i t + o(t) \ ,
\end{align}
and as $t \rightarrow -\infty$:
\begin{align}
  \label{eq:crude_scattering2}
  p_i(t) = \Lambda_{n-i+1} + o(1) \quad , \quad q_i(t) = \Lambda_{n-i+1} t + o(t) \ .
\end{align}
If one is interested in the scattering map for momenta i.e the relation between the behaviors as $t \rightarrow \pm \infty$, it is given by reordering eigenvalues in opposite order. This tantamounts to the multiplication by the permutation matrix $w_0$. The result holds in fact more generally from the works of Goodman and Wallach for Toda lattices in other Lie types \cite[Subsection 2.3]{bib:GWII}.

\section{Moser's scattering}
\label{subsection:Toda_Moser_scattering}

For our purposes, we are much more interested in the scattering of positions. In a really beautiful paper \cite{bib:Moser75}, Moser refines the $o(t)$ error in Eq. \eqref{eq:crude_scattering1} and Eq. \eqref{eq:crude_scattering2}. He finds there exists a $\delta>0$, depending on eigenvalue gaps, such that $t \rightarrow \infty$:
$$ q_i(t) = \Lambda_i t + \beta_i^+ + o(e^{-\delta |t|})$$
and as $t \rightarrow -\infty$:
$$ q_i(t) = \Lambda_{n-i+1} t + \beta_i^- + o(e^{-\delta |t|})$$
and the differences $\beta_{n-i+1}^+-\beta_i^-$ are related to a logarithmic interaction potential. The exact expression is given in \cite{bib:Moser75} eq. (4.3) and (4.4):
$$ \beta_{n-i+1}^+-\beta_i^- = 2 \sum_{j<i} \log\left| \Lambda_i - \Lambda_j \right| - 2 \sum_{i<j} \log\left| \Lambda_i - \Lambda_j \right|$$
Hence, the second order of the scattering in position variables reveals the logarithmic interaction potential for eigenvalues. That was the starting point of our investigations.

\medskip

{\bf Solution by inverse scattering.}
In order to completely solve the Toda flow, at the theoretical level, one starts by computing the diagonalization of the initial data $X_0$. Since $\Lambda$ is given by the infinite time behavior of the system, it is called the scattering data. Inverse scattering consists in using $\Lambda$ in order to compute the finite time solution, which is given by the computation of $Q$. 

Nevertheless, due to the special structure of $Q$ discussed in Remark \ref{rmk:u_angle_coordinate}, there is no need to compute the entire matrix. It suffices to keep track of the vector:
$$ u := Q^{-1} e$$
where $e = e_1$ is the first vector in $\C^n$. 
   
   \begin{proposition}
   \label{proposition:u_solution}
   The vector $u$ follows the dynamic:
   $$ u_t = \frac{e^{\Lambda t} u_0}{ \| e^{\Lambda t} u_0 \| }$$
   \end{proposition}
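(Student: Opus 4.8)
\emph{The plan} is to show that $u = Q^{-1}e$ satisfies a closed first-order ODE on the unit sphere, and then to recognise the claimed expression as the unique solution of that ODE. The whole computation rests on the Lax-pair form $\dot{Q} = -\Pi_\kfrak(X)Q$ together with the anti-Hermiticity of $\Pi_\kfrak(X)$.

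First I would record that $Q_t$ stays unitary: since $\Pi_\kfrak(X)$ is anti-Hermitian, the equation $\dot{Q}=-\Pi_\kfrak(X)Q$ preserves $Q^*Q$, so $Q_t^*Q_t=\Id$, hence $Q^{-1}=Q^*$ and $\|u\|=\|Q^*e\|=\|e\|=1$ for all $t$. Differentiating $u=Q^*e$ and using $\dot{(Q^*)}=(\dot Q)^*=Q^*\Pi_\kfrak(X)$ (again by anti-Hermiticity) yields $\dot u=Q^*\,\Pi_\kfrak(X)\,e$.

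The key step is the identity $\Pi_\kfrak(X)\,e=Xe-\langle e,Xe\rangle\,e$. This follows from the explicit projection $\Pi_\kfrak(X)=[X]_--[X]_-^* + i\Im[X]_0$, whose imaginary diagonal term vanishes because $X$ is real symmetric. Applying it to $e=e_1$: the strictly-upper-triangular matrix $[X]_-^*$ annihilates $e_1$, while $[X]_-e_1=(X-[X]_0-[X]_+)e_1=Xe_1-a_1e_1$ since $[X]_+e_1=0$ and $[X]_0e_1=a_1e_1$, with $a_1=\langle e_1,Xe_1\rangle$. Substituting the diagonalisation $X=Q\Lambda Q^*$ then gives $Q^*Xe=\Lambda Q^*e=\Lambda u$ and $\langle e,Xe\rangle=\langle Q^*e,\Lambda Q^*e\rangle=\langle u,\Lambda u\rangle$, so that
$$ \dot u = \Lambda u - \langle u,\Lambda u\rangle\,u = \left(\Lambda-\langle u,\Lambda u\rangle\,\Id\right)u \ . $$

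Finally I would verify that $t\mapsto e^{\Lambda t}u_0/\|e^{\Lambda t}u_0\|$ solves this ODE. Writing $v_t=e^{\Lambda t}u_0$, which never vanishes since $e^{\Lambda t}$ is invertible and $u_0\neq 0$, a direct differentiation of $v_t/\|v_t\|$ using $\dot v_t=\Lambda v_t$ reproduces exactly the right-hand side above, and at $t=0$ it equals $u_0$. Since the vector field is smooth and the unit sphere is invariant (the normalisation preserves $\|u\|=1$, as one checks from the ODE), the Cauchy--Lipschitz uniqueness theorem identifies the two curves, which proves the claim. I expect the only delicate point to be the projection identity $\Pi_\kfrak(X)e=Xe-\langle e,Xe\rangle e$, where the precise form of $\Pi_\kfrak$ and the fact that $e=e_1$ is the \emph{first} basis vector must be used carefully; the remaining manipulations are routine.
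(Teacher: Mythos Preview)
Your proof is correct and takes a genuinely different route from the paper's. The paper does not derive an ODE for $u$ directly; instead it observes that
\[
\frac{d}{dt}\bigl(Q_t e^{\Lambda t}\bigr) = \bigl([X]_0 + [X^*+X]_+\bigr)\,Q_t e^{\Lambda t},
\]
a linear equation with \emph{upper-triangular} increments, so that $Q_t e^{\Lambda t} = T_t Q_0$ for some upper-triangular $T_t$. Applying $(\,\cdot\,)^{-1}$ to $e_1$ then gives $u_t = e^{\Lambda t} u_0 / [T_t]_{11}$, and the scalar is fixed by normalisation. Your approach is more elementary and self-contained: you isolate the closed Rayleigh-quotient flow $\dot u = \Lambda u - \langle u,\Lambda u\rangle u$ on the sphere and identify it by inspection, avoiding any matrix factorisation. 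The paper's argument, on the other hand, makes the QR structure explicit (it literally exhibits the triangular factor $T_t$), which ties in with the broader theme that the Toda flow interpolates the QR algorithm; your proof trades that structural insight for directness. Both are clean, and your cautious remark about the projection identity $\Pi_\kfrak(X)e = Xe - \langle e,Xe\rangle e$ is well placed: it is precisely where the choice $e=e_1$ and the real-symmetric nature of $X$ enter.
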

   \begin{proof}
   Since
   $$ \dot{Q} = -\Pi_\kfrak X Q
              = -\left( X - \left[X\right]_0 - \left[X\right]_-^* - \left[X\right]_+ \right) Q
              = -Q \Lambda + \left( \left[X\right]_0 + \left[X^*+X\right]_+ \right) Q \ ,$$
   we have that:
   $$ \frac{d}{dt}\left( Q_t e^{\Lambda t} \right)
    = \left( \left[X\right]_0 + \left[X^*+X\right]_+ \right) Q_t e^{\Lambda t}$$
   The previous equation is a right-invariant autonomous equation, with upper triangular increments. As a consequence, there exists an upper triangular matrix $T_t$ with positive diagonal such that:
   $$ Q_t e^{\Lambda t} = T_t Q_0 \ ,$$
   and hence:
   $$ u_t = Q_t^{-1} e = e^{\Lambda t} Q_0^{-1} T_t^{-1} e = \frac{e^{\Lambda t} u_0}{[T_t]_{11}} \ .$$
   The proof is finished as $u_t$ needs to be of norm $1$.
   \end{proof}
   
   \subsection{Precise scattering asymptotics.}
   
   As mentioned before, one sees the appearance of the Vandermonde in the second order scattering asymptotics for positions, in the real tridiagonal case, thanks to Moser's result \cite{bib:Moser75}. His approach relied on real analyticity of the flow and seems difficult to adapt or to generalize. We will rather use an orthogonal polynomial technique that expresses the action variable in a form more amenable to asymptotics. The technique is used on real tridiagonal matrices to express orthogonal polynomials thanks to the coefficients in the three term recurrence (See Chapter II in \cite{bib:Szego75}). Its application to the tridiagonal Toda has been implemented in handwritten lecture notes of Deift, the author managed to get his hands on.
   
   Let us introduce the Gram determinant, using the unit vector $e=e_1$ again:
   $$ \Delta_k
    = \det\left( \langle X^{i-1} e, X^{j-1} e \rangle \right)_{i,j=1}^k \ .
   $$
   The scattering of positions can be observed from the convergence of the $b_i$ to zero, and the analogue of Moser's scattering result comes from the asymptotic analysis of $\Delta_k$ as:
   \begin{align}
   \label{eq:delta_to_b}
   \Delta_k & = \prod_{i=1}^k b_i^{2(k-i)} \ .
   \end{align}
   
   Indeed, notice that, for all $k \leq n$:
   $$ X^k e = \left( \prod_{i=1}^k b_i \right) e_{k+1} + h_k$$
   where $h_k \in \textrm{Span}_\C\left( e_1, \dots, e_{k} \right)$. Hence:
   $$
     e \wedge X e \wedge X^2 e \wedge \dots \wedge X^{k-1} e\\
   = \left( \prod_{l=1}^{k-1} \prod_{i=1}^l b_i \right) e_1 \wedge e_2 \wedge e_3 \wedge \dots \wedge e_k
   $$
   and
   $$ \Delta_k
    = \det\left( \langle X^{i-1} e, X^{j-1} e \rangle \right)_{i,j=1}^k
    = \| e \wedge X e \wedge X^2 e \wedge \dots \wedge X^{k-1} e \|^2
    = \prod_{i=1}^k b_i^{2(k-i)}
   $$
   
   \begin{thm}[Precise scattering asymptotics]
   \label{thm:precise_scattering}
   As $t \rightarrow \infty$:
   $$ \Delta_k(t) = e^{2 \left( \sum_{l=1}^k \Lambda_1 - \Lambda_l \right) t}
                    \left| \Delta\left(\Lambda_1, \dots, \Lambda_k \right) \right|^2 
                    \prod_{l=1}^k \left| \frac{u_0(l)}{u_0(1)} \right|^2 (1 + o(1))
   $$ 
   In particular, the initial angle coordinates can be read from asymptotics:
   $$ \left| \frac{u_0(l)}{u_0(1)} \right|
               \sim
               e^{\left( \Lambda_1 - \Lambda_k \right) t}
               \frac{ \left| \Delta\left(\Lambda_1, \dots, \Lambda_{k-1} \right) \right| }
                    { \left| \Delta\left(\Lambda_1, \dots, \Lambda_{k  } \right) \right| }
               \sqrt{ \frac{ \Delta_{k}(t) }{ \Delta_{k-1}(t)} }
   $$
   \end{thm}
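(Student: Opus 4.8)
The plan is to push everything to the spectral side, where the explicit dynamics of Proposition \ref{proposition:u_solution} turns the asymptotics into a bookkeeping of exponential rates. Writing $X_t = Q_t \Lambda Q_t^*$ with $Q_t$ orthogonal and $u_t = Q_t^{-1} e$, isospectrality gives for every $i,j$
\[
\langle X_t^{i-1} e, X_t^{j-1} e\rangle = \langle \Lambda^{i-1} u_t, \Lambda^{j-1} u_t\rangle = \sum_{l=1}^n |u_t(l)|^2\, \Lambda_l^{\,i+j-2}\ .
\]
Thus $\Delta_k(t)$ is the determinant of the weighted moment matrix $\bigl(\sum_l w_l(t)\,\Lambda_l^{\,i+j-2}\bigr)_{i,j=1}^k$ with weights $w_l(t) = |u_t(l)|^2$, the entire $t$-dependence now sitting in the weights while the spectrum stays frozen.

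First I would factorize this moment matrix. Let $V$ be the rectangular Vandermonde matrix $V_{l,i} = \Lambda_l^{\,i-1}$ ($1\le l\le n$, $1\le i\le k$) and $D=\diag(w_1(t),\dots,w_n(t))$, so that the moment matrix equals $V^\top D V = (D^{1/2}V)^\top (D^{1/2}V)$. The Cauchy--Binet formula then yields the exact finite-time identity
\[
\Delta_k(t) = \sum_{\substack{S\subset\{1,\dots,n\}\\ |S|=k}} \Bigl(\prod_{l\in S} w_l(t)\Bigr)\, \bigl|\Delta(\Lambda_S)\bigr|^2\ ,
\]
since each $k\times k$ minor of $D^{1/2}V$ is a sub-Vandermonde $\prod_{l\in S}\sqrt{w_l(t)}\cdot\Delta(\Lambda_S)$, where $\Delta(\Lambda_S)$ denotes the Vandermonde of the subfamily $(\Lambda_l)_{l\in S}$.

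Next I would insert the explicit dynamics. By Proposition \ref{proposition:u_solution} one has $w_l(t) = e^{2\Lambda_l t}|u_0(l)|^2 / N_t$ with $N_t = \sum_{m} e^{2\Lambda_m t}|u_0(m)|^2$, so that
\[
\Delta_k(t) = \frac{1}{N_t^{\,k}} \sum_{|S|=k} e^{\,2t\sum_{l\in S}\Lambda_l}\Bigl(\prod_{l\in S}|u_0(l)|^2\Bigr)\bigl|\Delta(\Lambda_S)\bigr|^2\ .
\]
Because $\Lambda_1 > \cdots > \Lambda_n$, the functional $S\mapsto \sum_{l\in S}\Lambda_l$ is maximized over $k$-subsets by the unique subset $S=\{1,\dots,k\}$, while $N_t \sim e^{2\Lambda_1 t}|u_0(1)|^2$. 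Extracting these dominant terms produces the exponential prefactor $\exp\bigl(2t\,(\sum_{l=1}^k\Lambda_l - k\Lambda_1)\bigr)=\exp\bigl(2t\sum_{l=1}^k(\Lambda_l-\Lambda_1)\bigr)$ (a decaying factor, consistently with $\Delta_k=\prod_i b_i^{2(k-i)}\to 0$), together with $|\Delta(\Lambda_1,\dots,\Lambda_k)|^2$ and the ratio $\prod_{l=1}^k |u_0(l)/u_0(1)|^2$, which is the asserted leading behavior. The ``in particular'' statement follows by forming $\Delta_k(t)/\Delta_{k-1}(t)$, where everything but the $l=k$ contribution cancels, leaving $e^{-2(\Lambda_1-\Lambda_k)t}\,\bigl|\Delta(\Lambda_1,\dots,\Lambda_k)/\Delta(\Lambda_1,\dots,\Lambda_{k-1})\bigr|^2\,|u_0(k)/u_0(1)|^2$, and solving for $|u_0(k)/u_0(1)|$.

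The main obstacle is the genericity and uniformity needed for the $1+o(1)$. One must know that $u_0(l)\ne 0$ for each $l$ — equivalently that $e$ is a cyclic vector for $X_0$, which is guaranteed by $b_i>0$ on $\Tc$ — so that the products $\prod_{l\in S}|u_0(l)|^2$ attached to the dominant subset do not vanish. One must then check that every competing subset $S\ne\{1,\dots,k\}$, as well as the subleading terms of $N_t$, is exponentially smaller by a margin controlled by $\min_i(\Lambda_i-\Lambda_{i+1})>0$, so that these errors assemble into a single multiplicative $1+o(1)$. This strict-gap bookkeeping, rather than the algebra, is where the care lies.
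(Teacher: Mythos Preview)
Your proposal is correct and follows essentially the same route as the paper. The only difference is cosmetic: the paper packages the key identity $\Delta_k(t)=\sum_{|S|=k}\bigl(\prod_{l\in S}|u_t(l)|^2\bigr)\,|\Delta(\Lambda_S)|^2$ through the spectral measure $\mu_t$ and a Heine-type antisymmetrization (its Propositions \ref{proposition:delta_determinantal_formula} and \ref{proposition:mu_dynamic}), whereas you reach the same formula via the factorization $V^\top D V$ and Cauchy--Binet; the insertion of the dynamics of $u_t$ and the extraction of the dominant subset $\{1,\dots,k\}$ are then identical.
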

   
   This is nothing but Moser's result with a different proof. Our proof (actually Deift's) is proved by introducing the probability measure on $\R$ - often called the "spectral measure" in the literature devoted to Jacobi operators:
   $$ \mu_t\left(d\lambda \right)
      =
      \sum_{k=1}^n \left| \langle e_k, u_t \rangle \right|^2 \delta_{\Lambda_k}( d\lambda ),$$
   thanks to which the Gram determinant has a nice formula:
   \begin{proposition}
   \label{proposition:delta_determinantal_formula}
   $$ \Delta_k(t)
    = \int_{\lambda_1 > \lambda_2 > \dots > \lambda_k }
      \left| \Delta\left(\lambda_1, \dots, \lambda_k \right) \right|^2 \prod_{l=1}^k \mu_t\left(d\lambda_l \right)$$ 
   In particular, we have the exact result:
   $$ \Delta_n(t)
    = \left| \Delta\left(\Lambda_1, \dots, \Lambda_n \right) \right|^2 
      \prod_{l=1}^n |\langle e_l, u_t \rangle|^2 \ .
   $$ 
   \end{proposition}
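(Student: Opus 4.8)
The plan is to recognise the Gram matrix $\left( \langle X^{i-1} e, X^{j-1} e\rangle \right)_{i,j=1}^k$ as the truncated moment (Hankel) matrix of the spectral measure $\mu_t$, and then to invoke the classical Heine--Andreief identity to rewrite its determinant as a symmetric multiple integral against $\mu_t$.

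First I would exploit the spectral decomposition $X_t = Q_t \Lambda Q_t^*$ with $Q_t$ unitary. Since $u_t = Q_t^{-1} e = Q_t^* e$ and $X_t$ is self-adjoint with real spectrum, for every integer $m \geq 0$ one computes
$$ \langle e, X_t^m e \rangle = \langle e, Q_t \Lambda^m Q_t^* e \rangle = \langle u_t, \Lambda^m u_t \rangle = \sum_{j=1}^n \Lambda_j^m \, |\langle e_j, u_t\rangle|^2 = \int_\R \lambda^m \, \mu_t(d\lambda) \ . $$
Using self-adjointness to write $\langle X^{i-1} e, X^{j-1} e\rangle = \langle e, X^{i+j-2} e\rangle$, this shows that the $(i,j)$ entry of the Gram matrix is precisely the $(i+j-2)$-th moment of $\mu_t$; in other words $\Delta_k(t)$ is the $k$-th Hankel determinant of $\mu_t$.

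Second I would apply the Heine--Andreief identity with $f_i(\lambda) = g_i(\lambda) = \lambda^{i-1}$, for which $\det\left( \lambda_j^{i-1} \right)_{i,j=1}^k = \Delta(\lambda_1, \dots, \lambda_k)$ is the Vandermonde. This yields
$$ k! \ \det\left( \int_\R \lambda^{i+j-2} \, \mu_t(d\lambda) \right)_{i,j=1}^k = \int_{\R^k} \left| \Delta(\lambda_1, \dots, \lambda_k) \right|^2 \prod_{l=1}^k \mu_t(d\lambda_l) \ . $$
The integrand is symmetric in the $\lambda_l$, and the Vandermonde factor vanishes on every diagonal $\{\lambda_i = \lambda_j\}$, so the integral over $\R^k$ splits into $k!$ equal contributions coming from the strictly ordered sectors. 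This point deserves attention precisely because $\mu_t$ is atomic, so that the diagonals carry positive mass; it is the Vandermonde that annihilates them. Cancelling the factorial gives the first displayed formula for $\Delta_k(t)$.

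Finally, for $k = n$ the measure $\mu_t$ is carried by exactly the $n$ distinct atoms $\Lambda_1 > \dots > \Lambda_n$. In the ordered integral each $\lambda_l$ must be one of these atoms, and non-distinct choices are killed by the Vandermonde, so the $\lambda_l$ form a permutation of the $\Lambda_j$; the constraints $\lambda_1 > \dots > \lambda_n$ and $\Lambda_1 > \dots > \Lambda_n$ then force $\lambda_l = \Lambda_l$ for all $l$. Only this single configuration survives, contributing the weight $\prod_{l=1}^n |\langle e_l, u_t\rangle|^2$ and the exact formula. I do not expect a serious obstacle here; the only genuine care points are checking that $Q_t$ is unitary so that inner products, hence moments, are preserved, and the combinatorial collapse of the ordered integral in the $k=n$ case.
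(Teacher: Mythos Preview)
Your proof is correct and follows essentially the same route as the paper: identify the Gram entries as moments of $\mu_t$ via the spectral decomposition, then obtain the $|\Delta|^2$ integral. The only cosmetic difference is that you invoke the Heine--Andreief identity as a named result, whereas the paper re-derives it in place via multilinearity of the determinant and an antisymmetrization over $\mathfrak{S}_k$; your added remark that the Vandermonde kills the diagonal contributions of the atomic measure is a welcome clarification the paper leaves implicit.
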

   \begin{proof}
   We have:
   \begin{align*}
        \langle X_t^{i-1} e, X_t^{j-1} e \rangle
    = & \langle Q_t \Lambda^{i-1} Q_t^* e, Q_t \Lambda^{j-1} Q_t^* e \rangle \\
    = & \langle \Lambda^{i-1} u_t, \Lambda^{j-1} u_t \rangle \\
    = & \sum_{k} \Lambda_k^{i-1+j-1} \left| \langle e_k, u_t \rangle \right|^2 \\
    = & \int_\R \lambda^{i-1+j-1} \mu_t\left( d\lambda \right) \ .
   \end{align*}
   And therefore, by multilinearity of the determinant with respect to columns:
   \begin{align*}
       \Delta_k(t)
   = & \det\left( \langle X^{i-1}_t e, X^{j-1}_t e \rangle \right)_{i,j=1}^k \\
   = & \det\left( \int_\R \lambda_j^{i-1+j-1} \mu_t\left( d\lambda_j \right) \right)_{i,j=1}^k \\
   = & \int_{\R^k} \det\left( \lambda_j^{i-1} \lambda_j^{j-1} \right)_{i,j=1}^k \prod_{l=1}^k \mu_t\left(d\lambda_l \right)\\ 
   = & \int_{\R^k} \det\left( \lambda_j^{i-1} \right)_{i,j=1}^k \prod_{j=1}^k \lambda_j^{j-1} \prod_{l=1}^k \mu_t\left(d\lambda_l \right)\\ 
   = & \int_{\R^k} \Delta( \lambda_1, \dots, \lambda_k ) \prod_{l=1}^k \lambda_l^{l-1} \prod_{l=1}^k \mu_t\left(d\lambda_l \right) \ .
   \end{align*}
   Using a standard anti-symmetrization trick:
   \begin{align*}
       \Delta_k(t)
   = & \frac{1}{k!} \sum_{\sigma \in W} \int_{\R^k} \varepsilon(\sigma) 
       \Delta\left(\lambda_1, \dots, \lambda_k \right)
       \prod_{l=1}^k \lambda_{\sigma(l)}^{l-1}
       \prod_{l=1}^k \mu_t\left(d\lambda_l \right)\\ 
   = & \frac{1}{k!} \int_{\R^k} \left| \Delta\left(\lambda_1, \dots, \lambda_k \right) \right|^2 \prod_{l=1}^k \mu_t\left(d\lambda_l \right) \ .
   \end{align*}
   And by symmetry, we can order the integration variables:
   $$ \Delta_k(t)
    = \int_{\lambda_1 > \lambda_2 > \dots > \lambda_k }
      \left| \Delta\left(\lambda_1, \dots, \lambda_k \right) \right|^2 \prod_{l=1}^k \mu_t\left(d\lambda_l \right) \ .$$
   \end{proof}
   
   Moreover, $\mu_t$ has a simple dynamic:
   \begin{proposition}
   \label{proposition:mu_dynamic}
   $$ \mu_t\left(d\lambda \right)
    = \frac{e^{2 \lambda t}}{\left\| e^{\Lambda t} u_0 \right\|^2}
      \mu_0(d\lambda) 
   $$ 
   \end{proposition}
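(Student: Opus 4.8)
The plan is to substitute the explicit solution for $u_t$ obtained in Proposition \ref{proposition:u_solution} directly into the definition of the spectral measure $\mu_t$. First I would recall that $\Lambda = \diag(\Lambda_1, \dots, \Lambda_n)$ is diagonal, so that $e^{\Lambda t}$ is the diagonal matrix with entries $e^{\Lambda_k t}$. Consequently the pairing $\langle e_k, u_t \rangle$ is computed immediately:
$$ \langle e_k, u_t \rangle = \frac{\langle e_k, e^{\Lambda t} u_0 \rangle}{\left\| e^{\Lambda t} u_0 \right\|} = \frac{e^{\Lambda_k t} \langle e_k, u_0 \rangle}{\left\| e^{\Lambda t} u_0 \right\|} \ . $$
Taking squared moduli gives $\left| \langle e_k, u_t \rangle \right|^2 = e^{2\Lambda_k t} \left| \langle e_k, u_0 \rangle \right|^2 / \left\| e^{\Lambda t} u_0 \right\|^2$, where the normalising denominator does not depend on $k$.

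Next I would insert this into the definition $\mu_t(d\lambda) = \sum_{k=1}^n \left| \langle e_k, u_t \rangle \right|^2 \delta_{\Lambda_k}(d\lambda)$, yielding
$$ \mu_t(d\lambda) = \frac{1}{\left\| e^{\Lambda t} u_0 \right\|^2} \sum_{k=1}^n e^{2\Lambda_k t} \left| \langle e_k, u_0 \rangle \right|^2 \delta_{\Lambda_k}(d\lambda) \ . $$
The one point requiring care --- and really the only step that is not pure bookkeeping --- is to notice that the Dirac mass $\delta_{\Lambda_k}$ is supported at $\lambda = \Lambda_k$, so that the scalar weight $e^{2\Lambda_k t}$ coincides with the function $\lambda \mapsto e^{2\lambda t}$ tested against that mass. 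Hence $e^{2\Lambda_k t} \delta_{\Lambda_k}(d\lambda) = e^{2\lambda t} \delta_{\Lambda_k}(d\lambda)$ for each $k$, and the common factor $e^{2\lambda t}$ may be pulled outside the sum.

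Finally, recognising that $\sum_{k=1}^n \left| \langle e_k, u_0 \rangle \right|^2 \delta_{\Lambda_k}(d\lambda)$ is precisely $\mu_0(d\lambda)$ by definition, I would conclude
$$ \mu_t(d\lambda) = \frac{e^{2\lambda t}}{\left\| e^{\Lambda t} u_0 \right\|^2} \mu_0(d\lambda) \ , $$
which is the claimed identity. I do not expect any genuine obstacle: the result is essentially a one-line substitution, once the diagonal action of $e^{\Lambda t}$ and the support identity $\lambda = \Lambda_k$ under $\delta_{\Lambda_k}$ are made explicit. As a consistency check one may integrate both sides over $\R$ and use that $\mu_t$ and $\mu_0$ are probability measures (since $u_t$ and $u_0$ are unit vectors), which recovers $\int_\R e^{2\lambda t} \mu_0(d\lambda) = \left\| e^{\Lambda t} u_0 \right\|^2$ and confirms the normalisation.
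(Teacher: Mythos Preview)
Your proof is correct and follows essentially the same route as the paper: substitute the explicit formula for $u_t$ from Proposition~\ref{proposition:u_solution} into the definition of $\mu_t$, use that $e^{\Lambda t}$ is diagonal to extract $e^{2\Lambda_k t}$ in each weight, and then replace $e^{2\Lambda_k t}$ by $e^{2\lambda t}$ against the Dirac mass $\delta_{\Lambda_k}$. The only differences are that you spell out the diagonal-action step and add a normalisation sanity check, neither of which changes the argument.
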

   \begin{proof}
   From the the dynamic of $u$ in Proposition \ref{proposition:u_solution}, we obtain:
   \begin{align*}
       \mu_t\left(d\lambda \right)
   = & \sum_{k} \left| \langle e_k, u_t \rangle \right|^2\delta_{\Lambda_k}(d\lambda) \\
   = & \frac{1}{\left\| e^{\Lambda t} u_0 \right\|^2}
       \sum_{k} e^{2 \Lambda_k t} \left| \langle e_k, u_0 \rangle \right|^2\delta_{\Lambda_k}(d\lambda) \\
   = & \frac{1}{\left\| e^{\Lambda t} u_0 \right\|^2}
       e^{2 \lambda t} \mu_0(d\lambda)  
   \end{align*}
   \end{proof}
   
   The Gram determinant has a nice formula in terms of the previous measure:

   \begin{proof}[Proof of theorem \ref{thm:precise_scattering}]
   The combination of the two previous propositions gives:
   $$ \Delta_k(t) = \| e^{\Lambda t} u_0 \|^{-2k}
                 \int_{\lambda_1 > \lambda_2 > \dots > \lambda_k }
                 e^{2 \left( \sum_{l=1}^k \lambda_l \right) t}
                 \left| \Delta\left(\lambda_1, \dots, \lambda_k \right) \right|^2 \prod_{l=1}^k 
                 \mu_0\left( d\lambda_l \right)$$
   As $t \rightarrow \infty$, the integral's dominant terms are obtained by picking up only the $k$ largest eigenvalues $\Lambda_1 > \Lambda_2 > \dots > \Lambda_k$. Hence the asymptotics:
   \begin{align*}
   \Delta_k(t) = &
                 \| e^{\Lambda t} u_0 \|^{-2k}
                 \int_{\lambda_1 > \lambda_2 > \dots > \lambda_k }
                 e^{2 \left( \sum_{l=1}^k \lambda_l \right) t}
                 \left| \Delta\left(\lambda_1, \dots, \lambda_k \right) \right|^2 \prod_{l=1}^k 
                 \mu_0\left( d\lambda_l \right)\\
   = &  \| e^{\Lambda t} u_0 \|^{-2k}
        e^{2 \left( \sum_{l=1}^k \Lambda_l \right) t} \left| \Delta\left(\Lambda_1, \dots, \Lambda_k \right) \right|^2 
       \prod_{l=1}^k \left| \langle e_l, u_0 \rangle \right|^2 (1 + o(1))
   \end{align*}
   Combining that fact with the asymptotics for $\| e^{\Lambda t} u_0 \|$ yields the result.
   \end{proof}
   In particular, Moser's result appears explicitly in:
   \begin{align*}
      & b_k^{2}(t)\\
    = & \frac{ \Delta_{k} \Delta_{k-2}}{\Delta_{k-1}^2}\\
    = & e^{2 \left( \Lambda_k - \Lambda_{k-1} \right) t}
        \frac{ \left| \Delta\left(\Lambda_1, \dots, \Lambda_k \right) \right|^2 \left| \Delta\left(\Lambda_1, \dots, \Lambda_{k-1} \right) \right|^2 }
             { \left| \Delta\left(\Lambda_1, \dots, \Lambda_{k-2} \right) \right|^4 }
        \frac{ \left| u_0(k) \right|^2 \left| u_0(k-1) \right|^2}{ \left| u_0(k-2) \right|^4 }
        (1 + o(1) )
   \end{align*}
               
   \subsection{Scattering map}
   Consider the vector field on $\Tc$:
   $$ \forall x \in \Tc, \ V^{free}_x := -\sum_{i=1}^{n-1} E_{i+1, i} b_i \left( a_i - a_{i+1}\right)$$
   whose flow gives:
   $$ e^{t V^{free}} \cdot b_i = b_i e^{-\left( a_i - a_{i+1}\right)t}$$
   
   In the scattering regime, diagonals are constant and the dynamic of position is linear in time. Therefore, the ``free'' dynamic corresponding to isolated particles is given by the flow $e^{t V^{free}}$. The question at hand is the behavior of the sequence of diffeomorphisms:
   $$ e^{-t V} \cdot e^{t V^{free}} $$
   as $t \rightarrow \infty$. Clearly, the flow generated by $V^{free}$ is not isospectral, but in the regime where the extra-diagonals are small, it almost is.
   
   \begin{thm}
   \label{thm:scattering_map}
   Let $\Delta$ be the set of matrices with increasing diagonal entries. There exists a map $\Sc: \Tc \cap \Delta \rightarrow \Tc$ such that:
   $$ \Sc = \lim_{t \rightarrow \infty} e^{-t V} \cdot e^{t V^{free}}$$ 
   We have:
   $$ \Sc x = Q [x]_{0} Q^*$$
   where the first row of $Q$, $u_0$ satisfies:
   $$ \langle e_k, u_0 \rangle = \frac{ \prod_{i=1}^{k-1} b_i(0) }{ \prod_{i=1}^{k-1} \left| \Lambda_k - \Lambda_i \right| } \langle e_1, u_0 \rangle \ .$$
   \end{thm}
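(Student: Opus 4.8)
The plan is to work in the spectral/angle coordinates of Remark \ref{rmk:u_angle_coordinate}: a matrix in $\Tc$ is determined by its ordered spectrum together with the angle vector $u$ (the first row of a diagonalizing orthogonal $Q$), and under the Toda flow the spectrum is frozen while $u$ evolves by the explicit formula of Proposition \ref{proposition:u_solution}. Computing $\Sc x$ thus splits into finding the spectrum and the angle of the limit $\lim_{t\to\infty} e^{-tV}\cdot y_t$, where $y_t := e^{tV^{free}}x$. First I would record that the free flow is completely explicit and acts nontrivially only on the off-diagonal: it fixes the diagonal $[x]_0 = \diag(a_1, \dots, a_n)$ and rescales each $b_i$ by a pure time-exponential (the flow of $V^{free}$ displayed above). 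On $\Tc \cap \Delta$ these exponentials drive the off-diagonals to zero, so that $y_t \to [x]_0$.

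Since $e^{-tV}$ acts by isospectral transformations, the spectrum of $e^{-tV} y_t$ equals that of $y_t$, which converges to $\Spec([x]_0) = \{a_1, \dots, a_n\}$ by continuity of the spectrum. This already forces $\Sc x = Q [x]_0 Q^*$ for an orthogonal $Q$, and reduces the whole statement to identifying the angle vector $u_0$, i.e. the first row of $Q$.

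For the angle I would invert Proposition \ref{proposition:u_solution}: writing $\nu(t)$ for the (decreasingly ordered) spectrum of $y_t$ and $v(t)$ for its angle, the angle of $e^{-tV}y_t$ is the normalization of $e^{-\diag(\nu(t)) t}\, v(t)$, and the task is to extract its leading asymptotics as $t\to\infty$. Rather than re-run Moser's real-analytic argument, I would feed the explicit off-diagonal decay of $y_t$ into the precise scattering asymptotics of Theorem \ref{thm:precise_scattering}, using the Gram-determinant identity $\Delta_k = \prod_{i=1}^k b_i^{2(k-i)}$ from \eqref{eq:delta_to_b} in its telescoped form $\sqrt{\Delta_k/\Delta_{k-1}} = \prod_{i=1}^{k-1} b_i$. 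Substituting the free-flow off-diagonals produces the product $\prod_{i=1}^{k-1} b_i(0)$ times a pure exponential, while the Vandermonde ratio $|\Delta(\Lambda_1,\dots,\Lambda_{k-1})|/|\Delta(\Lambda_1,\dots,\Lambda_k)| = \prod_{i=1}^{k-1}|\Lambda_k-\Lambda_i|^{-1}$ supplies the denominator. The exponential prefactors are designed to cancel against $e^{-\diag(\nu(t))t}$, leaving, after normalization, exactly $\langle e_k, u_0\rangle = \frac{\prod_{i=1}^{k-1} b_i(0)}{\prod_{i=1}^{k-1}|\Lambda_k - \Lambda_i|}\langle e_1, u_0\rangle$.

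The main obstacle is precisely this last cancellation: one must control the competition between the decaying off-diagonals of $y_t$ and the backward-Toda factor $e^{-\diag(\nu(t))t}$, while tracking the order reversal between the free-flow indexing (increasing diagonal $a_1 < \dots < a_n$) and the decreasing convention in which the scattered spectrum is diagonalized. This bookkeeping is exactly what encodes Moser's logarithmic phase shift, so the heavy lifting is delegated to Theorem \ref{thm:precise_scattering}; what remains is to verify that the error terms are uniform enough for the normalized limit to exist (which simultaneously establishes that $\Sc$ is a genuine map $\Tc \cap \Delta \to \Tc$) and that the resulting $u_0$ has the all-nonzero, positive structure required for $Q$ to conjugate $[x]_0$ back into $\Tc$.
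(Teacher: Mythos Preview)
Your proposal is correct and follows essentially the same route as the paper's proof: both arguments note that the free flow drives the off-diagonals of $y_t=e^{tV^{free}}x$ to zero so that the spectrum of $e^{-tV}y_t$ converges to $[x]_0$, and both read off the limiting angle vector by feeding the explicit free-flow off-diagonals $b_i(t)$ into the second formula of Theorem~\ref{thm:precise_scattering} via the telescoped identity $\sqrt{\Delta_k/\Delta_{k-1}}=\prod_{i=1}^{k-1}b_i$, obtaining the exponential cancellation that leaves $\prod_{i=1}^{k-1} b_i(0)/\prod_{i=1}^{k-1}|\Lambda_k-\Lambda_i|$. Your intermediate step of explicitly inverting Proposition~\ref{proposition:u_solution} to write the angle of $e^{-tV}y_t$ as the normalization of $e^{-\diag(\nu(t))t}v(t)$ is a correct reformulation but ultimately redundant, since Theorem~\ref{thm:precise_scattering} already packages that inversion; the paper simply applies the theorem directly and skips this layer. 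Your explicit flagging of the uniformity of error terms and of the order-reversal bookkeeping between the increasing diagonal of $x$ and the decreasing eigenvalue convention is more careful than the paper, which glosses over both points.
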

   \begin{proof}
   Let $x$ be a matrix in $\Tc \cap \Delta$. Then:
   $$ e^{t V^{free}} \cdot x = [x]_{0} + \sum_{i} x_{i+1,i} e^{-\left( x_{i,i} - x_{i+1,i+1} \right)t} \left( E_{i+1, i} + E_{i, i+1} \right)$$
   Therefore, $e^{t V^{free}} \cdot x$ converges exponentially fast to a diagonal matrix, for which the sorted eigenvalues lie on the diagonal. The conserved quantities by the flow $e^{-t V}$ are therefore exponentially close to $[x]_{0}$.
   
   Now consider an asymptotic behavior, obtained after a free evolution:
   $$ b_k(t) = b_k(0) e^{\left( \Lambda_{k+1} - \Lambda_k \right) t } \left( 1 + o(1) \right)$$
   Thanks to Theorem \ref{thm:precise_scattering}, this corresponds to a scattering state at time $t$ for the Toda evolution which had at time $0$ the angle coordinates:
   \begin{align*}
   \frac{|v_0(k)| }{|v_0(1)|}
            & = e^{\left( \Lambda_1 - \Lambda_k \right) t}
                \sqrt{ \frac{\Delta_k(t) }{ \Delta_{k-1}(t)} }
                \frac{\left| \Delta(\Lambda_1, \dots, \Lambda_{k-1}) \right|} {\left| \Delta(\Lambda_1, \dots, \Lambda_k) \right|} \left( 1 + o(1) \right)\\
            & = e^{\left( \Lambda_1 - \Lambda_k \right) t}
                \left( \prod_{i=1}^{k-1} b_i(t) \right)
                \frac{\left| \Delta(\Lambda_1, \dots, \Lambda_{k-1}) \right|} {\left| \Delta(\Lambda_1, \dots, \Lambda_k) \right|}
                                                                        (1 + o(1)) \\
            & = \frac{ \prod_{i=1}^{k-1} b_i(0) }{ \prod_{i=1}^{k-1} \left| \Lambda_k - \Lambda_i \right| } (1 + o(1))\\
   \end{align*}
   Therefore $e^{-t V} \cdot e^{t V^{free}} \cdot x$ converges to the element with action variables $[x]_{0}$ and angle coordinates $v_0$. 
   \end{proof}
   
   From the previous proof and from Proposition \ref{proposition:delta_determinantal_formula}, we notice that:
   \begin{align}
   \label{eq:delta_of_scattering}
   \Delta_n\left( \Sc x \right) & = \left| \Delta\left(\Lambda_1, \dots, \Lambda_n \right) \right|^2 
                                    \prod_{k=1}^n |v_0(k)|^{2}
                                  = |v_0(1)|^{2n} \prod_{i=1}^{n-1} b_i^{2(n-i)}
   \end{align}

   \subsection{Invariant differential forms}
   
   Define the volume form on $\Tc$:
   \begin{align}
   \label{eq:form_T}
   \omega_\Tc := \wedge_{i=1}^{n-1} \frac{db_i}{b_i} \wedge \omega_\afrak 
   \end{align}
   where
   $$ \omega_{\afrak}  = \wedge_{i=1}^n d a_i \ .$$
   As we will use these forms as reference integration measures, it is crucial that are invariant under the flow.
   
   \begin{proposition}[Form invariance]
   The Toda flow preserves the volume form $\omega_\Tc$:
   $$ \Lc_{V} \omega_{\Tc} = 0$$
   In particular:
   $$ \left( e^{tV} \right)_* \omega_\Tc = 0$$
   \end{proposition}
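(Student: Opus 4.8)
The plan is to reduce the statement to the vanishing of a divergence by passing to logarithmic coordinates on the off-diagonal. First I would set $c_i := \log b_i$ for $1 \le i \le n-1$, so that $db_i/b_i = dc_i$ and the form becomes the standard Lebesgue volume form
$$ \omega_\Tc = dc_1 \wedge \cdots \wedge dc_{n-1} \wedge da_1 \wedge \cdots \wedge da_n $$
in the global coordinate system $(c_1, \dots, c_{n-1}, a_1, \dots, a_n)$ on the $(2n-1)$-dimensional manifold $\Tc$. Since $\omega_\Tc$ is a top-degree form, $d\omega_\Tc = 0$, and Cartan's formula gives $\Lc_V \omega_\Tc = d(\iota_V \omega_\Tc) + \iota_V d\omega_\Tc = d(\iota_V \omega_\Tc)$. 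In these linear coordinates this equals $(\operatorname{div} V)\,\omega_\Tc$, where $\operatorname{div} V = \sum_{i} \partial \dot{c}_i / \partial c_i + \sum_{i} \partial \dot{a}_i / \partial a_i$. Hence it suffices to show that this divergence vanishes.

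Next I would record the components of $V$ in these coordinates. From Remark \ref{rmk:QR_stabilizers} we have $\dot{b}_i = b_i(a_i - a_{i+1})$, so
$$ \dot{c}_i = \frac{\dot{b}_i}{b_i} = a_i - a_{i+1} \ . $$
For the diagonal I would read off the $(i,i)$ entry of the Lax equation $\dot X = [X, \Pi_\kfrak(X)]$: writing $\Pi_\kfrak(X)$ as the antisymmetric matrix with entries $b_i$ below and $-b_i$ above the diagonal, a short commutator computation yields
$$ \dot{a}_i = 2\left( b_i^2 - b_{i-1}^2 \right) \ , $$
with the convention $b_0 = b_n = 0$.

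The key observation — and the only thing that really needs checking — is the decoupled dependence structure of these two families. Indeed $\dot{c}_i = a_i - a_{i+1}$ depends only on the diagonal variables $a_j$ and is independent of every $c_j$, while $\dot{a}_i = 2(b_i^2 - b_{i-1}^2) = 2(e^{2c_i} - e^{2c_{i-1}})$ depends only on the off-diagonal variables $c_j$ and is independent of every $a_j$. Consequently $\partial \dot{c}_i / \partial c_i = 0$ and $\partial \dot{a}_i / \partial a_i = 0$ for all $i$, so $\operatorname{div} V = 0$ and therefore $\Lc_V \omega_\Tc = 0$. Integrating the Lie derivative along the flow then gives $(e^{tV})_* \omega_\Tc = \omega_\Tc$, i.e. the form is flow-invariant.

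I expect the only genuine subtlety to be the bookkeeping in the commutator computation for $\dot{a}_i$ — pinning down the factor of $2$ and the signs — but this is routine. Conceptually, the logarithmic change of variables is precisely what exposes the Hamiltonian (Liouville) nature of the dynamics: the pair $(c_i, a_i)$ behaves like conjugate coordinates in which the vector field is divergence-free, which explains why the weighted form $\wedge_i \, db_i/b_i \wedge \omega_\afrak$, rather than the naive Lebesgue form $\wedge_i \, db_i \wedge \omega_\afrak$, is the invariant one.
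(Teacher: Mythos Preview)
Your argument is correct. The change to logarithmic off-diagonal coordinates $c_i=\log b_i$ turns $\omega_\Tc$ into Lebesgue measure, and the decoupling $\dot c_i=\dot c_i(a)$, $\dot a_i=\dot a_i(c)$ makes the divergence vanish immediately; the flow invariance then follows.

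The paper's proof is the same idea compressed into a citation of Liouville's theorem: it rewrites $\omega_\Tc$ in the physical $(p,q)$ coordinates of the Toda Hamiltonian as $\bigl(\bigwedge_i dp_i\bigr)\wedge\bigl(\bigwedge_i d(q_i-q_{i+1})\bigr)$ and observes that this is the Liouville volume of a Hamiltonian system, hence automatically preserved. Your coordinates $(a_i,c_i)$ are, up to affine rescaling, exactly $(p_i,\,q_{i+1}-q_i)$, so the two proofs coincide at the level of the change of variables; you simply carry out the divergence computation that underlies Liouville's theorem rather than invoking it by name. What the paper's phrasing buys is the extra structural remark that an invariant \emph{symplectic} form sits beneath the volume form; what your version buys is self-containment and an explicit check that no factor is hiding. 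Both are fine.
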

   \begin{proof}
   In the positions and momenta coordinates of the classical Toda Hamiltonian, the form $i_* \omega_{\Tc}$ can be written as:
   $$ i_* \omega_{\Tc} = \left( \wedge_{i=1}^{n} dp_i \right) \wedge \left( \wedge_{i=1}^{n-1} dq_i - dq_{i+1} \right)$$
   which is the Liouville form of the Toda system. In this case, not only $i_* \omega_{\Tc}$ is preserved but there is an underlying symplectic form that is preserved. 
   \end{proof}

   \section{Novel proof of Theorem \ref{thm:DumitriuEdelman} via scattering}
   \label{section:Toda_proof_main}
   
   Consider a random matrix $X_0 \in \Tc$ distributed as in \eqref{eq:T_beta}. We are interested in the distribution of eigenvalues $\Lambda$ as well as the distribution of the vector $u_0$. Clearly, from the properties of the Toda flow, for every bounded continuous function:
   $$ \E\left( f( \Lambda, u_0) \right)
    = \lim_{t \rightarrow \infty} \E\left( f\left( [X_t]_{0}, u_0 \right) \right)$$
   Now, from the explicit expression of $\chi$ and Gaussian distributions, we have:
   \begin{align*}
     & \E\left( f\left( [X_t]_{0}, u_0 \right)\right)\\
   = & C_{n, \beta} \int_\Tc f\left( [e^{tV} \cdot X_0]_{0}, v_0\left( X_0 \right) \right) \prod_{i=1}^{n-1} b_i(X_0)^{2 \beta (n-i) } e^{-\half \| X_0 \|^2 } \omega_{\Tc}(dX_0)
   \end{align*}
   where 
   $$ C_{n, \beta} = \frac{1}{(2\pi)^{n(n+1)} \prod_{j=1}^{n-1} \Gamma( \half j \beta)} \ .$$
   Thanks to Equation \eqref{eq:delta_to_b}:
   \begin{align*}
     & \E\left( f\left( [X_t]_{0}, u_0 \right)\right)\\
   = & C_{n, \beta} \int_\Tc f\left( [e^{tV} \cdot X_0]_{0}, u_0\left( X_0 \right) \right) \Delta_n( X_0 )^{\beta} e^{-\half \| X_0 \|^2 } \omega_{\Tc}(dX_0)
   \end{align*}
   Then, we make successively the change of variable by the flow $e^{tV}$ and $e^{t V^{free}}$. Thanks to the invariance of measures:
   \begin{align*}
     & \E\left( f\left( [X_t]_{0}, u_0 \right)\right)\\
   = & C_{n, \beta} \int_\Tc f\left( [e^{tV} \cdot X_0]_{0}, u_0\left( e^{-t V} \cdot e^{tV} \cdot X_0 \right) \right) \\
     & \quad \quad \quad \Delta_n(  e^{-t V} \cdot e^{tV} \cdot X_0 )^{\beta} e^{-\half \| X_0 \|^2 } \omega_{\Tc}(dX_0)\\
   = & C_{n, \beta} \int_\Tc f\left( [X_0]_{0}, u_0\left( e^{-t V} \cdot X_0 \right) \right) \\
     & \quad \quad \quad \Delta_n(  e^{-t V} \cdot X_0 )^{\beta} e^{-\half \| X_0 \|^2 } \omega_{\Tc}(dX_0)\\
   = & C_{n, \beta} \int_\Tc f\left( [X_0]_{0}, u_0\left( e^{-t V} \cdot e^{t V^{free}} \cdot X_0 \right) \right)\\
     & \quad \quad \Delta_n(  e^{-t V} \cdot e^{t V^{free}} \cdot X_0 )^{\beta}
                e^{-\half \| e^{t V^{free}} \cdot X_0 \|^2 } \omega_{\Tc}(dX_0) \ .
   \end{align*}
   Now, as $t \rightarrow \infty$, we have the appearance of the scattering map $\Sc = \lim_{t \rightarrow \infty} e^{-t V} \cdot e^{t V^{free}}$. By assuming $f$ supported on $\Tc \cap \Delta$ in the first variable, there is no need to bother about the initial data $X_0$ having its eigenvalues ordered. Moreover, the coefficients of the extra-diagonal in $e^{t V^{free}} \cdot X_0$ asymptotically vanish. Hence:
   \begin{align*}
     & \E\left( f( \Lambda, u_0) \right)\\
   = & C_{n, \beta} \int_\Tc f\left( [X_0]_{0}, u_0\left( \Sc X_0 \right) \right)
                \Delta_n(  \Sc X_0 )^{\beta}
                e^{-\half \| [X_0]_{0} \|^2 } \omega_{\Tc}(dX_0)\\
   = & C_{n, \beta} \int_{\lambda_1 > \lambda_2 > \dots > \lambda_n} d \Lambda \ e^{ -\half \sum |\Lambda_i|^2 }\\
     &        \int_{\left( \R^+ \right)^{n-1} } \wedge_{i=1}^{n-1} \frac{db_i}{b_i} \ f\left( \Lambda, u_0\left( \Sc X_0 \right) \right)
                                            \Delta_n(  \Sc X_0 )^{\beta} \ .
   \end{align*}
   
   Finally, we conclude thanks to the identity \eqref{eq:delta_of_scattering} that:
   \begin{align*}
     & \E\left( f( \Lambda, u_0) \right)\\
   = & C_{n, \beta} \int_{\lambda_1 > \lambda_2 > \dots > \lambda_n} d \Lambda e^{ -\half \sum |\Lambda_i|^2} \\
     &              \int_{\left( \R^+ \right)^{n-1} } \wedge_{i=1}^{n-1} \frac{db_i}{b_i} \ \left( \prod_{i=1}^{n-1} b_i^{2\beta(n-i)} \right)
                                                                                      |u_0(1)|^{2 \beta n} f\left( \Lambda, u_0\left( \Sc X_0 \right) \right) \ .
   \end{align*}
   Moreover, by changing the integration variables to $c_k$:
   $$ \frac{|u_0(k+1)|}{|u_0(1)| }
    = \frac{ \prod_{i=1}^{k-1} b_i }{ \prod_{i=1}^{k-1} \left| \Lambda_k - \Lambda_i \right| }
    = c_k \ , 
   $$
   and knowing that:
   $$ \prod_{i=1}^{n-1} b_i^{(n-i)} = \left| \Delta\left(\Lambda_1, \dots, \Lambda_n \right) \right| \prod_{i=1}^{n-1} c_i \ ,$$
   we obtain:
   \begin{align*}
     & \E\left( f( \Lambda, u_0) \right)\\
   = & C_{n, \beta} \int_{\lambda_1 > \lambda_2 > \dots > \lambda_n} d \Lambda \left| \Delta\left(\Lambda_1, \dots, \Lambda_n \right) \right|^{2\beta} e^{ -\half \sum |\Lambda_i|^2} \\
     &  \int_{\left( \R^+ \right)^{n-1} } \wedge_{i=1}^{n-1} \frac{dc_i}{c_i} \ \left( \prod_{i=1}^{n-1} c_i^{2\beta} \right)
        |u_0(1)|^{2 \beta n} f\left( \Lambda, |u_0(1)|\left(1, c_1, c_2 \dots \right) \right) \ .
   \end{align*}
   We recognize the distribution of the Gaussian $\beta$-ensemble for the spectrum $\Lambda$. Moreover, the distribution of the vector $u_0$ is given by the $\beta$-Dirichlet distribution. 
   
   This concludes the proof of Theorem \ref{thm:DumitriuEdelman}.

\bibliographystyle{alpha}
\bibliography{HDR.bib}

\end{document}